\documentclass[a4paper,12pt]{article}

\usepackage{amsmath,amssymb}
\usepackage{amscd}
\usepackage{mathrsfs}
\usepackage{amsthm}
\usepackage{color}

\usepackage{comment}

\theoremstyle{plain}

\newtheorem{theorem}{\bf Theorem}[section]

\newtheorem{corollary}[theorem]{\bf Corollary}

\theoremstyle{definition}

\setlength{\topmargin}{0in}
\setlength{\oddsidemargin}{0.26in}
\setlength{\evensidemargin}{0.29in}
\setlength{\textwidth}{6.1in}
\setlength{\textheight}{9.1in}
\setcounter{tocdepth}{2}

\tolerance=2000
\hbadness=2000
\vbadness=2000
  \makeatletter
  \newcommand{\subsubsubsection}{\@startsection{paragraph}{4}{\z@}%
    {1.0\Cvs \@plus.5\Cdp \@minus.2\Cdp}%
    {.1\Cvs \@plus.3\Cdp}%
    {\reset@font\sffamily\normalsize}
  }
  \makeatother
  \setcounter{secnumdepth}{4}

\title{An index theorem for split-step quantum walks}

\author{Yasumichi Matsuzawa\thanks{Department of Mathematics, Faculty of Education, Shinshu University,
        6-Ro, Nishi-nagano, Nagano 380-8544, Japan,
        e-mail: myasu@shinshu-u.ac.jp}}
\date{}

\begin{document}
\maketitle

\begin{abstract}
Split-step quantum walks are models of supersymmetric quantum walk, and thus their Witten indices can be defined.
We prove that the Witten index of a split-step quantum walk coincides with the difference between the winding numbers 
of functions corresponding to the right-limit of coins and the left-limit of coins.
As a corollary, we give an alternative derivation of the index formula for split-step quantum walks, which is recently obtained by Suzuki and Tanaka.
\end{abstract}

\section{Introduction and main result}
Suzuki \cite{Su} introduced the notion of supersymmetric quantum walk in an abstract way.
Such a quantum walk is defined by a pair of two unitary self-adjoint operators  $\Gamma$ and $C$ on a Hilbert space $\mathscr{H}$.
The time evolution of the system is described by the unitary operator $U:=\Gamma C$.
Then $U$ has a chiral symmetry $\Gamma$, that is $\Gamma U\Gamma=U^*$.
Thus the self-adjoint operator $Q:={\rm Im}(U)=(U-U^*)/2i$, which is called a supercharge,
satisfies $\Gamma Q+Q\Gamma=0$.
Since $\Gamma$ is unitary self-adjoint, 
we can decompose $\mathscr{H}$ and $Q$ as
\[
\mathscr{H}=\ker{(\Gamma-1)}\oplus\ker{(\Gamma+1)},\ \ \ \ \ 
Q=\begin{pmatrix}
0&Q_+^*\\
Q_+&0
\end{pmatrix},
\]
where $Q_+$ is a bounded operator from $\ker{(\Gamma-1)}$ to $\ker{(\Gamma+1)}$.
Suzuki defined the Witten index of the pair $(\Gamma,C)$ by
\[
{\rm ind}(\Gamma, C) := \dim{\ker{Q_+}}-\dim{\ker{Q_+^*}}.
\]
If $Q_+$ is a Fredholm operator, then the pair $(\Gamma, C)$ is called a Fredholm pair.
In this case, the Witten index is nothing but the Fredholm index of $Q_+$.
We denote by ${\rm index}\,A$ the Fredholm index of a Fredholm operator $A$.
Then we have
\[
{\rm ind}(\Gamma, C) = {\rm index}\,Q_+,
\]
whenever $(\Gamma, C)$ is a Fredholm pair.

Very recently, Suzuki and Tanaka \cite{ST} have computed the Witten index of a split-step quantum walk \cite{KRBD}.
A split-step quantum walk is a pair $(\Gamma, C)$ of two unitary self-adjoint operators  
$\Gamma$ and $C$ on the Hilbert space $\ell^2(\mathbb{Z})\oplus\ell^2(\mathbb{Z})$ defined as follows.
Let $L$ be the left-shift operator on $\ell^2(\mathbb{Z})$, that is, 
\[
(L\psi)(x):=\psi(x+1),\ \ \ \ \ \psi\in\ell^2(\mathbb{Z}),\ \ x\in\mathbb{Z}.
\]
Let $a:\mathbb{Z}\to\mathbb{R}$, $b:\mathbb{Z}\to\mathbb{C}$ be functions such that $a(x)^2+|b(x)|^2=1$ for each $x\in\mathbb{Z}$.
We identify these functions with the corresponding multiplication operators on $\ell^2(\mathbb{\mathbb{Z}})$.
Then the shift operator $\Gamma$ and the coin operator $C$ of the system are defined by
\[
\Gamma:=
\begin{pmatrix}
p&qL\\
\bar{q}L^*&-p
\end{pmatrix},\ \ \ \ \ 
C:=\begin{pmatrix}
a&b^*\\
b&-a
\end{pmatrix},
\]
where $(p,q)\in\mathbb{R}\times\mathbb{C}$ are scalars satisfying $p^2+|q|^2=1$.
This definition is a generalization of Kitagawa's split-step quantum walk \cite{KRBD}.
For details, see \cite[Example 2.2]{FFS1} or \cite[Section 2]{FFS2}.
Since $\Gamma$ and $C$ are unitary self-adjoint, the pair $(\Gamma, C)$ defines a supersymmetric quantum walk.
In this paper, we suppose that the limits
\[
a(\pm\infty):=\lim_{x\to\pm\infty}a(x),\ \ \ \ \ b(\pm\infty):=\lim_{x\to\pm\infty}b(x)
\]
exist.
Then Suzuki and Tanaka have shown the following index formula \cite[Theorem A]{ST}.
\begin{theorem}[Suzuki-Tanaka index formula]
The pair $(\Gamma, C)$ is Fredholm if and only if $|p|\not=|a(+\infty)|$ and $|p|\not=|a(-\infty)|$.
In this case, we have
\[
{\rm ind}(\Gamma, C) = \begin{cases}
{\rm sgn}(p),\ \ \ \ \ &|a(+\infty)|<|p|<|a(-\infty)|,\\
-{\rm sgn}(p), &|a(-\infty)|<|p|<|a(+\infty)|,\\
0, &{\rm otherwise}.
\end{cases}
\]
\end{theorem}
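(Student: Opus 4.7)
The plan is to derive the Suzuki--Tanaka formula as a corollary of the main winding-number theorem of this paper, which asserts
\[
{\rm ind}(\Gamma, C) = w(f_+) - w(f_-)
\]
for certain scalar symbol functions $f_\pm : \mathbb{T} \to \mathbb{C}$ associated with the asymptotic coin data at $\pm\infty$. Once that general formula is available, the argument is purely computational: identify $f_\pm$ explicitly for the split-step model, read off the Fredholm criterion from their non-vanishing, and compute the winding numbers by elementary case analysis.

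First I would unfold $Q_+$ into a concrete band operator on $\ell^2(\mathbb{Z})$. Assuming $q \neq 0$ (the case $q = 0$ reduces to a diagonal calculation), the eigenspaces of $\Gamma$ are identified with $\ell^2(\mathbb{Z})$ via $f \mapsto (f,\, c_\pm L^* f)$ with $c_\pm = (\pm 1 - p)/q$. Computing $U = \Gamma C$, forming $Q = (U - U^*)/(2i)$, and restricting to these eigenspaces expresses $Q_+$ as an operator on $\ell^2(\mathbb{Z})$ whose coefficients are affine in $a(x), b(x)$ and involve a single shift. Replacing $a(x), b(x)$ by their limits at $\pm\infty$ yields Laurent operators whose scalar symbols have the form $f_\pm(k) = \alpha_\pm + \beta_\pm e^{ik}$ for explicit $\alpha_\pm, \beta_\pm$ built from $p, q, a(\pm\infty), b(\pm\infty)$. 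Using $a^2 + |b|^2 = 1$ and $p^2 + |q|^2 = 1$, a direct computation should yield $|\alpha_\pm|^2 - |\beta_\pm|^2$ proportional to $p^2 - a(\pm\infty)^2$, from which the Fredholm criterion $|p| \neq |a(\pm\infty)|$ is immediate.

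For the winding numbers I would use the elementary fact that $w(\alpha + \beta e^{ik}) = 0$ when $|\alpha| > |\beta|$ and $w(\alpha + \beta e^{ik}) = 1$ when $|\alpha| < |\beta|$, together with an orientation factor that will turn out to be ${\rm sgn}(p)$. Subtracting $w(f_-)$ from $w(f_+)$ in the four possible combinations of relative sizes of $|p|, |a(+\infty)|, |a(-\infty)|$ produces exactly the four cases of the theorem. The principal obstacle, and the step where most care is needed, is the explicit identification of $\alpha_\pm, \beta_\pm$: the eigenvector parametrization, the sign convention for the shift $L$, and the Fourier orientation must be chosen compatibly so that the symbol admits the clean two-term form and so that the emergence of ${\rm sgn}(p)$ is correctly traced back through the identifications.
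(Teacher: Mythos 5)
Your overall strategy---invoke the paper's main winding-number theorem and then reduce the Suzuki--Tanaka formula to an elementary computation of the winding numbers of the two limit symbols---is exactly the route the paper takes in Section 4. But there is a genuine gap in the computational core: the limit symbols are \emph{not} of the two-term form $\alpha_\pm+\beta_\pm e^{ik}$. The operator $Q_{\epsilon,+}$ contains both a forward and a backward shift (the terms $(1+p)e^{i\theta}Lb$ and $-(1-p)e^{-i\theta}b^*L^*$), so the constant-coefficient limit is a Laurent operator with the three-term symbol
\[
F_{\pm}(z)=\tfrac{i}{2}\left\{(1+p)e^{i\theta}b(\pm\infty)\,z-(1-p)e^{-i\theta}\overline{b(\pm\infty)}\,\bar{z}+2|q|a(\pm\infty)\right\},
\]
and this structure is essential, not a matter of convention. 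A symbol $\alpha+\beta e^{ik}$ has winding number $0$ or $1$ according to whether $|\alpha|>|\beta|$ or $|\alpha|<|\beta|$; since you compute $|\alpha_\pm|^2-|\beta_\pm|^2$ proportional to $p^2-a(\pm\infty)^2$, your winding numbers would depend only on $|p|$ versus $|a(\pm\infty)|$ and could never distinguish $p>0$ from $p<0$. The ``orientation factor ${\rm sgn}(p)$'' you appeal to is not available: the orientation of $\mathbb{T}$ is fixed once in the statement of the main theorem and cannot be chosen separately in each case of the final formula. So your scheme cannot produce the ${\rm sgn}(p)$ appearing in the Suzuki--Tanaka formula.

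In the actual computation the sign of $p$ enters through the competition between the $z$ and $\bar{z}$ coefficients, whose magnitudes are $(1+p)|b|$ and $(1-p)|b|$. Concretely, one passes to the quadratic polynomial $zF_{\pm}$ (which shifts the winding number by $1$), locates its two roots $z_{1}^{(\pm)},z_{2}^{(\pm)}$, and checks that $|z_j^{(\pm)}|<1$ if and only if $(-1)^j a(\pm\infty)<p$. This yields ${\rm wn}(zF_{\pm})\in\{0,1,2\}$, hence ${\rm wn}(F_{\pm})\in\{-1,0,1\}$, with the value $2$ (resp.\ $0$) occurring when $|a(\pm\infty)|<p$ (resp.\ $|a(\pm\infty)|<-p$); subtracting then gives all four cases, including the correct ${\rm sgn}(p)$ dependence. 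The Fredholm criterion likewise comes from $|z_1^{(\pm)}|\neq 1$ and $|z_2^{(\pm)}|\neq 1$, which one verifies is equivalent to $|p|\neq|a(\pm\infty)|$, rather than from a comparison $|\alpha_\pm|\neq|\beta_\pm|$. Your plan to re-derive $Q_+$ from scratch via an eigenvector parametrization of $\Gamma$ is harmless but unnecessary, since the explicit form of $Q_{\epsilon,+}$ is already supplied by the Suzuki--Tanaka decomposition; the step you must repair is the identification and winding-number analysis of the three-term symbol.
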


Their key idea to compute the Witten index is the following decomposition of the supercharge $Q$ \cite[Theorem 6]{ST}.

\begin{theorem}[Suzuki-Tanaka decomposition]
There exists a unitary operator $\epsilon$ on $\ell^2(\mathbb{Z})\oplus\ell^2(\mathbb{Z})$ such that
\[
\epsilon^*\Gamma\epsilon=\begin{pmatrix}
1&0\\
0&-1
\end{pmatrix},\ \ \ \ \ 
\epsilon^* Q\epsilon= \begin{pmatrix}
0&Q_{\epsilon,+}^*\\
Q_{\epsilon,+}&0
\end{pmatrix}, 
\]
where the bounded operator $Q_{\epsilon,+}:\ell^2(\mathbb{Z})\to\ell^2(\mathbb{Z})$ is defined by
\[
Q_{\epsilon,+}: = \frac{i}{2}\left[(1+p)e^{i\theta}Lb-(1-p)e^{-i\theta}b^*L^*+|q|\{a(\cdot+1)+a\}\right]
\]
with
\[
\theta:=\begin{cases}\arg{q},\ \ \ \ \ &q\not=0,\\
0,&q=0.
\end{cases}
\]
Moreover, the pair $(\Gamma, C)$ is Fredholm if and only if the operator $Q_{\epsilon,+}$ is Fredholm.
In this case, we have
\[
{\rm ind}(\Gamma, C) = {\rm index}\,Q_{\epsilon,+}.
\]
\end{theorem}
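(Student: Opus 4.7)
The plan is to construct $\epsilon$ as an explicit $2\times 2$ block operator whose columns are orthonormal ``operator-valued'' bases of the eigenspaces $\ker(\Gamma\mp 1)$. Since $\Gamma$ is an involution on $\ell^2(\mathbb{Z})\oplus\ell^2(\mathbb{Z})$, solving $\Gamma v=\pm v$ for $v=(\psi_1,\psi_2)^{T}$ gives a one-parameter family of solutions parametrized by $\ell^2(\mathbb{Z})$; concretely, the map $\xi\mapsto \frac{1}{\sqrt{2(1+p)}}\bigl((1+p)\xi,\bar{q}L^*\xi\bigr)^{T}$ is an isometry onto $\ker(\Gamma-1)$, and an analogous explicit isometry handles $\ker(\Gamma+1)$. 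Arranging these as the two columns yields $\epsilon$. The degenerate case $p=-1$ forces $q=0$ and hence $\Gamma$ is already diagonal, so it can be treated separately.

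Once $\epsilon$ is fixed, the identity $\epsilon^*\Gamma\epsilon={\rm diag}(1,-1)$ follows from a direct $2\times 2$ block computation using $p^2+|q|^2=1$ and $LL^*=L^*L=I$. Because conjugation by $\epsilon$ preserves the chiral relation $\Gamma Q+Q\Gamma=0$, the operator $\epsilon^*Q\epsilon$ is automatically off-diagonal, and the whole content is to identify its $(2,1)$-block with $Q_{\epsilon,+}$. Writing $Q=(\Gamma C-C\Gamma)/(2i)$, which uses the self-adjointness of $C$ and $\Gamma$ to give $U^*=C\Gamma$, I would expand $\Gamma C-C\Gamma$ block by block and then conjugate by $\epsilon$, applying the commutation relations $Lf = f(\cdot+1)L$ and $L^*f = f(\cdot-1)L^*$ for any multiplication operator $f$ on $\ell^2(\mathbb{Z})$.

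The main obstacle is the algebraic simplification in this last step. A naive expansion produces many terms involving nested products of $L^{\pm1}$, $a$, $b$, $b^*$ and scalars; collecting them requires the identities $|q|^2=(1+p)(1-p)$ and, when $q\neq 0$, $\bar{q}/|q|=e^{-i\theta}$, so that the factors $(1\pm p)^{-1/2}$ and $\bar{q}$ combine into the cleaner coefficients $(1\pm p)e^{\pm i\theta}$ and $|q|$ appearing in the stated formula. A related subtlety is the choice of parametrization of the two eigenspaces: different orthonormal bases produce unitarily equivalent but superficially different expressions, and one must pick the normalizations $1/\sqrt{2(1+p)}$ and $1/\sqrt{2(1-p)}$ compatibly on $\ker(\Gamma-1)$ and $\ker(\Gamma+1)$, so that the three summands in the claimed expression for $Q_{\epsilon,+}$ come out with the right shifts and phases.

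Finally, the Fredholm assertion is a formal consequence of the decomposition. The restrictions of $\epsilon$ to the two direct summands of $\ell^2(\mathbb{Z})\oplus\ell^2(\mathbb{Z})$ give unitary isomorphisms onto $\ker(\Gamma\mp 1)$ that intertwine $Q_+$ with $Q_{\epsilon,+}$. Since both Fredholmness and the Fredholm index are invariants under unitary conjugation, $Q_+$ is Fredholm if and only if $Q_{\epsilon,+}$ is, with the same index; combined with the identity ${\rm ind}(\Gamma,C)={\rm index}\,Q_+$ recalled in the introduction, this yields the last line of the theorem.
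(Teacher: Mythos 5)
First, a remark on the comparison you asked for: the paper does \emph{not} prove this statement at all --- it is imported verbatim from Suzuki--Tanaka \cite[Theorem 6]{ST} --- so there is no in-paper argument to measure your proposal against. Judged on its own terms, your strategy (diagonalize $\Gamma$ by an explicit $\epsilon$ whose columns parametrize $\ker(\Gamma\mp1)$, use the chiral relation to conclude that $\epsilon^*Q\epsilon$ is off-diagonal, read off the $(2,1)$-block, and transport Fredholmness through the unitary equivalence $Q_{\epsilon,+}=V_-^*Q_+V_+$) is sound, and the parts you actually carry out are correct: $\xi\mapsto\frac{1}{\sqrt{2(1+p)}}\bigl((1+p)\xi,\bar qL^*\xi\bigr)$ is indeed an isometry onto $\ker(\Gamma-1)$ when $p\neq-1$, off-diagonality of $\epsilon^*Q\epsilon$ does follow formally from $\Gamma Q+Q\Gamma=0$, and the final Fredholm/index assertion is a correct formal consequence of the construction together with ${\rm ind}(\Gamma,C)={\rm index}\,Q_+$.

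The genuine gap is the middle step, which is the entire content of the theorem: you never produce the stated formula for $Q_{\epsilon,+}$, and the parametrization you propose does not yield it. Carrying out your plan with $V_+\xi=\frac{1}{\sqrt{2(1+p)}}\bigl((1+p)\xi,\bar qL^*\xi\bigr)$ and the analogous $V_-\xi=\frac{1}{\sqrt{2(1-p)}}\bigl((1-p)\xi,-\bar qL^*\xi\bigr)$, one finds
\[
V_-^*QV_+=\frac{i}{2}\left[-(1+p)e^{i\theta}Lb+(1-p)e^{-i\theta}b^*L^*+|q|\{a(\cdot+1)+a\}\right],
\]
which differs from the asserted $Q_{\epsilon,+}$ in the sign of the two shift terms relative to the $a$-term. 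This cannot be repaired by rescaling $V_\pm$ by constant phases, since that only multiplies the block by a single overall phase and cannot change relative signs; one needs a position-dependent gauge, e.g.\ composing $V_\pm$ with multiplication by $(-1)^x$, which sends $L\mapsto-L$ and fixes both signs at once. Your sketch identifies only the normalizations $1/\sqrt{2(1\pm p)}$ and the phase bookkeeping $\bar q/|q|=e^{-i\theta}$ as the subtleties, so as written the argument terminates at a unitarily equivalent operator rather than the one the theorem asserts; and since the theorem claims an exact identity for a specific $Q_{\epsilon,+}$, the algebraic expansion you defer as ``the main obstacle'' is not an obstacle to the proof --- it \emph{is} the proof. (For the downstream use in this paper the discrepancy happens to be harmless, because $L\mapsto-L$ replaces the symbols $F_\pm(z)$ by $F_\pm(-z)$ and preserves winding numbers, but that is a remark about Theorem \ref{main}, not a proof of the decomposition as stated.)
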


Motivated by the above decomposition, we establish a connection between the Witten index and a topological index.
Before going to state the main result of the present paper, let us observe the asymptotic behavior of $Q_{\epsilon,+}$ as $x\to\pm\infty$.
Let $\mathbb{T}:=\{z\in\mathbb{C} \mid |z|=1\}$ be the one-dimensional torus.
Define functions $F_{\pm}:\mathbb{T}\to\mathbb{C}$ by
\[
F_{\pm}(z):=\frac{i}{2}\left\{(1+p)e^{i\theta}b(\pm\infty)z-(1-p)e^{-i\theta}\overline{b(\pm\infty)}\bar{z}+2|q|a(\pm\infty)\right\}.
\]
Then formally, we have
\[
Q_{\epsilon,+} \sim F_{\pm}(L),\ \ \ \ \ x\to\pm\infty.
\]
We are now ready to state our main result of this paper.
\begin{theorem}\label{main}
The pair $(\Gamma, C)$ is a Fredholm pair if and only if $F_+$ and $F_-$ vanish nowhere.
In this case, we have
\[
{\rm ind}(\Gamma, C) = {\rm wn}(F_+)-{\rm wn}(F_-),
\]
where ${\rm wn}(f)$ denotes the winding number of a continuous function $f:\mathbb{T}\to\mathbb{C}$ vanishing nowhere.
\end{theorem}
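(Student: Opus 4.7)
The plan is to deduce Theorem~\ref{main} from the Suzuki--Tanaka decomposition together with the classical Toeplitz index theorem. By that decomposition, it suffices to show that $Q_{\epsilon,+}$ is Fredholm on $\ell^{2}(\mathbb{Z})$ if and only if both $F_{+}$ and $F_{-}$ vanish nowhere on $\mathbb{T}$, and that in that case ${\rm index}\,Q_{\epsilon,+} = {\rm wn}(F_{+}) - {\rm wn}(F_{-})$.

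First I would split the ambient space as $\ell^{2}(\mathbb{Z}) = \mathcal{H}_{+}\oplus\mathcal{H}_{-}$, where $\mathcal{H}_{+} := \ell^{2}(\mathbb{Z}_{\geq 0})$ and $\mathcal{H}_{-} := \ell^{2}(\mathbb{Z}_{<0})$, with associated projections $P_{\pm}$. Because $Q_{\epsilon,+}$ is built only from $L$, $L^{*}$ and multiplication operators, it is tridiagonal and connects only nearest-neighbor sites, so the off-diagonal blocks $P_{\pm}Q_{\epsilon,+}P_{\mp}$ are supported on the single pair of sites $\{-1,0\}$ and are therefore of finite rank. Hence $Q_{\epsilon,+} = T_{+}\oplus T_{-}$ modulo a finite-rank perturbation, where $T_{\pm} := P_{\pm}Q_{\epsilon,+}P_{\pm}$.

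Next I would compare each $T_{\pm}$ with the truncated Laurent operator $P_{\pm}F_{\pm}(L)P_{\pm}$. Since $a(x)\to a(\pm\infty)$ and $b(x)\to b(\pm\infty)$ as $x\to\pm\infty$, the functions $a-a(\pm\infty)$ and $b-b(\pm\infty)$ vanish at infinity on $\mathbb{Z}_{\pm}$, hence act as compact multiplication operators on $\mathcal{H}_{\pm}$. Consequently $T_{\pm} - P_{\pm}F_{\pm}(L)P_{\pm}$ is compact. Identifying $\mathcal{H}_{+}$ with the Hardy space $H^{2}(\mathbb{T})$ via the Fourier transform $e_{n}\mapsto z^{n}$, under which $L$ corresponds to multiplication by $z^{-1}$, the operator $P_{+}F_{+}(L)P_{+}$ becomes the Toeplitz operator with symbol $F_{+}(\bar{z})$. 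Similarly, the reflection $e_{-n-1}\mapsto e_{n}$ identifies $\mathcal{H}_{-}$ with $H^{2}(\mathbb{T})$ and turns $L$ into the forward shift, so $P_{-}F_{-}(L)P_{-}$ becomes the Toeplitz operator with symbol $F_{-}(z)$. By the classical Toeplitz index theorem these are Fredholm exactly when their symbols vanish nowhere on $\mathbb{T}$, i.e.\ when $F_{\pm}$ vanish nowhere, with respective indices $-{\rm wn}(F_{+}(\bar{z})) = {\rm wn}(F_{+})$ and $-{\rm wn}(F_{-})$. Summing gives ${\rm index}\,Q_{\epsilon,+} = {\rm wn}(F_{+}) - {\rm wn}(F_{-})$, while Fredholmness of $Q_{\epsilon,+}$ is equivalent to the simultaneous Fredholmness of $T_{+}$ and $T_{-}$, which yields the ``if and only if.''

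The main obstacle will be the careful compact-perturbation bookkeeping: one must verify that the cross-boundary blocks of the tridiagonal operator really are finite rank, and that a bounded multiplication operator on a half-line whose symbol vanishes at infinity is compact. Beyond that, the chief pitfall is tracking conventions—because $L$ corresponds to the backward shift on $\mathcal{H}_{+}$ but (after reflection) to the forward shift on $\mathcal{H}_{-}$, the winding numbers enter with opposite signs on the two sides, which is precisely what produces the \emph{difference} ${\rm wn}(F_{+})-{\rm wn}(F_{-})$ rather than a sum in the final formula.
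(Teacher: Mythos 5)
Your proposal is correct and follows essentially the same route as the paper: cut $\ell^2(\mathbb{Z})$ at the origin (a finite-rank, hence compact, perturbation since $Q_{\epsilon,+}$ is tridiagonal), identify each half-line with the Hardy space $H^2$ so that the two pieces become Toeplitz operators with symbols $F_+(\bar{\cdot})$ and $F_-$, and invoke the Toeplitz index theorem, with the orientation reversal on the negative half-line producing the difference ${\rm wn}(F_+)-{\rm wn}(F_-)$. The only cosmetic difference is the order of the compact perturbations: the paper first freezes the coin at its limits $C(\pm\infty)$ and then applies a general cutting criterion for operators $A(f,g)$ with arbitrary continuous symbols, whereas you cut first and then absorb the variable coefficients into compact multiplication operators on each half-line.
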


Our main result asserts that 
the analytic index (the Fredholm index) coincides with the topological index (the difference between the winding numbers).
On the other hand, Suzuki proved that the absolute value of the Witten index gives 
a lower bound on the sum of  the dimensions of the eigenspaces 
of $U=\Gamma C$ corresponding to the eigenvalues $\pm 1$ \cite[Theorem 3.4 (ii)]{Su}.
Hence, we arrive at the following corollary.

\begin{corollary}
Suppose that $F_+$ and $F_-$ vanish nowhere.
If ${\rm wn}(F_+)\not={\rm wn}(F_-)$, then $U$ has at least one eigenstate corresponding to the eigenvalue $1$ or $-1$.
\end{corollary}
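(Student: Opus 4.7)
The plan is to assemble the corollary from two inputs: Theorem~\ref{main} of this paper, and Suzuki's lower bound \cite[Theorem 3.4 (ii)]{Su}, which is explicitly quoted in the paragraph immediately preceding the corollary. First, since $F_+$ and $F_-$ are assumed to vanish nowhere on $\mathbb{T}$, Theorem~\ref{main} tells us simultaneously that $(\Gamma,C)$ is a Fredholm pair (so its Witten index equals the Fredholm index of $Q_+$, an integer) and that ${\rm ind}(\Gamma,C)={\rm wn}(F_+)-{\rm wn}(F_-)$. The hypothesis ${\rm wn}(F_+)\neq{\rm wn}(F_-)$ therefore forces ${\rm ind}(\Gamma,C)\neq 0$.

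The second step is to feed this nonvanishing index into Suzuki's inequality, which states
\[
|{\rm ind}(\Gamma,C)|\le \dim\ker(U-1)+\dim\ker(U+1).
\]
Under this bound the left-hand side is at least $1$, so one of the two eigenspaces on the right must be nontrivial, delivering the required eigenstate with eigenvalue $+1$ or $-1$.

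For completeness I would add a short sentence recalling \emph{why} the Suzuki bound holds, in case the reader wants a self-contained argument. Since $U$ is unitary, $Q=(U-U^*)/2i$ satisfies $Q\psi=0$ if and only if $U\psi=U^*\psi$, equivalently $U^2\psi=\psi$, which decomposes as $\ker Q=\ker(U-1)\oplus\ker(U+1)$. On the other hand, the chiral block decomposition of $Q$ with respect to $\ker(\Gamma-1)\oplus\ker(\Gamma+1)$ gives $\ker Q=\ker Q_+\oplus\ker Q_+^*$, so the sum $\dim\ker Q_++\dim\ker Q_+^*$ equals $\dim\ker(U-1)+\dim\ker(U+1)$ and in particular dominates $|\dim\ker Q_+-\dim\ker Q_+^*|=|{\rm ind}(\Gamma,C)|$.

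There is no real obstacle here: both ingredients are either proved in the present paper (Theorem~\ref{main}) or available as a cited result in \cite{Su}, and the corollary is a clean two-line assembly. The only point requiring a moment of attention is that Theorem~\ref{main} guarantees the Fredholm property under the nonvanishing hypothesis, which is precisely what legitimizes identifying the Witten index with an integer-valued Fredholm index and hence drawing the strict inequality ${\rm ind}(\Gamma,C)\neq 0\Rightarrow|{\rm ind}(\Gamma,C)|\ge 1$.
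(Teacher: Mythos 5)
Your proposal is correct and follows exactly the route the paper intends: Theorem~\ref{main} gives ${\rm ind}(\Gamma,C)={\rm wn}(F_+)-{\rm wn}(F_-)\neq 0$, and Suzuki's bound $|{\rm ind}(\Gamma,C)|\leq\dim\ker(U-1)+\dim\ker(U+1)$ then forces one of the eigenspaces to be nontrivial. The extra paragraph justifying Suzuki's inequality via $\ker Q=\ker(U-1)\oplus\ker(U+1)=\ker Q_+\oplus\ker Q_+^*$ is a correct and welcome addition, but not a departure from the paper's argument.
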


The idea behind the proof of Theorem \ref{main} is to cut the space $\mathbb{Z}$ at $x=0$ into two half-spaces.
The cut can be realized as a compact perturbation of $Q_{\epsilon,+}$.
Note that Fredholm properties are invariant under compact perturbations. 
Moreover, after the cut, $Q_{\epsilon,+}$ becomes the direct sum of two Toeplitz operators with symbols $F_+(\bar{\cdot})$ and $F_-$,
thus we can apply the index theorem for Toeplitz operators to get the desired result.
More details will be discussed in Section 2 and Section 3.

We compare our results with a related work \cite{CGGSVWW}. 
Cedzich et al. defined an index for one-dimensional quantum walks with symmetries, including split-step quantum walks, from a representation theoretic point of view.
They showed that their index coincides with the difference between the winding numbers of functions corresponding to the right-limit and the left-limit of the time evolution operator.
Their index, however, is not a Fredholm index, and hence is different from a Witten index in the sense of Suzuki.
Moreover, their chiral symmetry differs from one used in the present paper.

This paper is organized as follows.
In section 2, we show a criterion corresponding to the above argument in a more general setting.
In section 3, we prove the main result by applying the criterion.
In section 4, we give an alternative derivation of the Suzuki-Tanaka index formula by using the main result.

\section{A general criterion}
Let $\{|x\rangle\mid x\in\mathbb{Z}\}$ be the standard orthogonal basis of $\ell^2(\mathbb{Z})$, 
and let $C(\mathbb{T})$ be the Banach space of continuous functions from $\mathbb{T}$ to $\mathbb{C}$. 
For any $f,g\in C(\mathbb{T})$, we define an operator $A(f,g)$ on $\ell^2(\mathbb{Z})$ by
\[
A(f,g)|x\rangle :=\begin{cases}g(L)|x\rangle, \ \ \ \ \ &x\geq 0,\\
f(L)|x\rangle, &x\leq -1,
\end{cases}
\]
where $L$ is the left-shift operator.
In this section, we study the Fredholmness of $A(f,g)$ and its Fredholm index.
Let $P_{\geq0}$ be the orthogonal projection onto the closed subspace spanned by $\{|x\rangle\mid x\geq0\}$,
and let $P_{\leq-1}$ be the orthogonal projection onto the closed subspace spanned by $\{|x\rangle\mid x\leq-1\}$.
We endow $C(\mathbb{T})\oplus C(\mathbb{T})$ with the norm
\[
\|(f,g)\|_{\infty} :=\max{(\|f\|_{\infty},\|g\|_{\infty})}.
\]
Then $C(\mathbb{T})\oplus C(\mathbb{T})$ is a Banach space.
For any $\psi\in\ell^2(\mathbb{Z})$ with finite support, we have
\begin{align*}
\|A(f,g)\psi\|^2 &\leq 2\|A(f,g)P_{\geq0}\psi\|^2+2\|A(f,g)P_{\leq-1}\psi\|^2\\
&= 2\|g(L)P_{\geq0}\psi\|^2+2\|f(L)P_{\leq-1}\psi\|^2\\
&\leq 2\|(f,g)\|_{\infty}^2\|\psi\|^2,
\end{align*}
thus $A(f,g)$ is bounded and $\|A(f,g)\|\leq\sqrt{2}\|(f,g)\|_{\infty}$ holds.

\begin{theorem}\label{criterion}
Let $f,g\in C(\mathbb{T})$ be arbitrary.
Then $A(f,g)$ is a Fredholm operator if and only if $f$ and $g$ vanish nowhere.
In this case, we have
\[
{\rm index}\,A(f,g) = {\rm wn}(g)-{\rm wn}(f).
\]
\end{theorem}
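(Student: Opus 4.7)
The plan is to realize $A(f,g)$ as a compact perturbation of a block-diagonal operator whose two blocks are classical half-line Toeplitz operators, and then invoke the Gohberg--Krein index theorem.

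From the definition of $A(f,g)$ one immediately reads off the identity $A(f,g) = g(L)P_{\geq 0} + f(L)P_{\leq -1}$. Set
\[
B := P_{\geq 0}\,g(L)\,P_{\geq 0} + P_{\leq -1}\,f(L)\,P_{\leq -1},
\]
so that $A(f,g) - B = P_{\leq -1}\,g(L)\,P_{\geq 0} + P_{\geq 0}\,f(L)\,P_{\leq -1}$. The first step is to prove that both cross-terms are compact. Using $L|x\rangle = |x-1\rangle$, a direct computation on monomials shows that $P_{\leq -1}L^n P_{\geq 0}$ vanishes for $n \leq 0$ and is finite rank (supported on ${\rm span}\{|0\rangle,\dots,|n-1\rangle\}$) for $n \geq 1$; an analogous statement holds for $P_{\geq 0}L^n P_{\leq -1}$. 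Hence both cross-terms are finite rank whenever $f$ and $g$ are trigonometric polynomials. Since trigonometric polynomials are dense in $C(\mathbb{T})$, $\|g(L)\| \leq \|g\|_{\infty}$, and the compact operators form a norm-closed two-sided ideal, $A(f,g) - B$ is compact for arbitrary $f,g \in C(\mathbb{T})$.

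The second step is to identify $B$ with a direct sum of half-line Toeplitz operators. Because $B$ preserves the decomposition $\ell^2(\mathbb{Z}) = H_+ \oplus H_-$, where $H_+ := P_{\geq 0}\,\ell^2(\mathbb{Z})$ and $H_- := P_{\leq -1}\,\ell^2(\mathbb{Z})$, we may write $B = T^+_g \oplus T^-_f$ with $T^+_g := P_{\geq 0}\,g(L)|_{H_+}$ and $T^-_f := P_{\leq -1}\,f(L)|_{H_-}$. Under the natural identification $H_+ \cong \ell^2(\mathbb{Z}_{\geq 0})$, the operator $T^+_g$ is a classical Toeplitz operator, and the Gohberg--Krein theorem (with sign conventions verified on monomials $z^n$) shows that $T^+_g$ is Fredholm if and only if $g$ vanishes nowhere, with ${\rm index}\,T^+_g = {\rm wn}(g)$. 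Similarly, under the identification $H_- \cong \ell^2(\mathbb{Z}_{\geq 0})$ given by $|-x-1\rangle \mapsto |x\rangle$, the left-shift $L$ acts in the opposite direction on $H_-$; hence $T^-_f$ is Fredholm if and only if $f$ vanishes nowhere, with ${\rm index}\,T^-_f = -{\rm wn}(f)$.

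Combining the two steps, Fredholmness is stable under the compact perturbation, so $A(f,g)$ is Fredholm if and only if $B$ is Fredholm if and only if both blocks are Fredholm if and only if both $f$ and $g$ vanish nowhere. In that case,
\[
{\rm index}\,A(f,g) = {\rm index}\,B = {\rm index}\,T^+_g + {\rm index}\,T^-_f = {\rm wn}(g) - {\rm wn}(f),
\]
as claimed. The main technical point, and the source of the minus sign in the final formula, is bookkeeping the orientation reversal in the identification of $H_-$ with $\ell^2(\mathbb{Z}_{\geq 0})$; the remainder is a routine density/compactness argument together with the classical Toeplitz index theorem.
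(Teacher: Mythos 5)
Your proof is correct and follows essentially the same route as the paper: you introduce the same block-diagonal operator $B$, show $A(f,g)-B$ is compact via the trigonometric-polynomial density argument, and identify the two blocks with half-line Toeplitz operators (with the same orientation-reversal bookkeeping on $H_-$) before applying the Toeplitz index theorem. The only cosmetic difference is that you work on $\ell^2(\mathbb{Z}_{\geq 0})$ and cite Gohberg--Krein where the paper works on the Hardy space $H^2$ and cites Murphy, which is the same theorem.
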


\begin{proof}
Define an operator $B(f,g)$ on $\ell^2(\mathbb{Z})$ by
\[
B(f,g)|x\rangle :=\begin{cases}P_{\geq0}g(L)|x\rangle, \ \ \ \ \ &x\geq 0,\\
P_{\leq-1}f(L)|x\rangle, &x\leq -1.
\end{cases}
\]
By the same argument as $A(f,g)$, we can show that $B(f,g)$ is bounded and $\|B(f,g)\|\leq\sqrt{2}\|(f,g)\|_{\infty}$ holds.

{\bf Step 1.} $A(f,g)-B(f,g)$ is a compact operator.\\
To see this, we note that the maps
\[
C(\mathbb{T})\oplus C(\mathbb{T})\ni (f,g)\to A(f,g),\ \ \ \ \ C(\mathbb{T})\oplus C(\mathbb{T})\ni (f,g)\to B(f,g)
\] 
are linear, and are continuous in the operator norm topology.
Since the set of compact operators is norm-closed, and since the set of trigonometric polynomials is dense in $C(\mathbb{T})$,
we may assume that $f$ and $g$ are trigonometric polynomials.
Write them as
\[
f(z)=\sum_{|m|\leq M}\alpha_mz^m,\ \ \ \ \ g(z)=\sum_{|m|\leq M}\beta_mz^m,\ \ \ \ \ z\in\mathbb{T},
\]
where $M\in\mathbb{N}$ and $\alpha_m,\beta_m\in\mathbb{C}$ for every $|m|\leq M$.
Then we have
\[
A(f,g)-B(f,g) = \sum_{|m|\leq M}D_m,\ \ \ \ \ D_m:=A(\alpha_mz^m,\beta_mz^m)-B(\alpha_mz^m,\beta_mz^m).
\]
Since $D_m|x\rangle=0$ for each $|x|\geq M+1$, $D_m$ is a finite rank operator, thus so is $A(f,g)-B(f,g)$.
This finishes the proof of Step 1.

Recall that the Hardy space $H^2$ is the closed subspace of $L^2(\mathbb{T},d\mu)$ spanned by $\{z^n \mid n\in\mathbb{Z},\ n\geq0\}$,
where $\mu$ denotes the normalized arc length measure on $\mathbb{T}$.
Let $P$ be the orthogonal projection of $L^2(\mathbb{T},d\mu)$ onto $H^2$.
For $h\in C(\mathbb{T})$, we identify $h$ with the corresponding multiplication operator on $L^2(\mathbb{T},d\mu)$.
The Toeplitz operator $T_h$ with symbol $h$ is a bounded operator on $H^2$ defined by $Ph$.
The index theorem for Toeplitz operators asserts that $T_h$ is Fredholm if and only if $h$ vanishes nowhere \cite[Corollary 3.5.12]{Mu}.
In this case, we have ${\rm index}\,T_h=-{\rm wn}(h)$ \cite[Theorem 3.5.15]{Mu}.

{\bf Step 2.} $B(f,g)$ is unitarily equivalent to $T_{g(\bar{\cdot})}\oplus T_{f}$.\\
To see this, we define a unitary operator $U:\ell^2(\mathbb{Z})\to H^2\oplus H^2$ by
\[
U|x\rangle := \begin{cases}\left(z^x,0\right),\ \ \ \ \ &x\geq0,\\
\left(0,z^{-x-1}\right),&x\leq-1.
\end{cases}
\]
By a limiting argument similar to Step 1,
we obtain $UB(f,g)U^*=T_{g(\bar{\cdot})}\oplus T_{f}$.
This finishes the proof of Step 2.

Since the Fredholmness and the Fredholm index are invariant under compact perturbations and unitary equivalence,
$A(f,g)$ is Fredholm if and only if $T_{g(\bar{\cdot})}\oplus T_{f}$ is Fredholm,
which is equivalent to that $f$ and $g$ vanish nowhere.
In this case, we have
\[
{\rm index}\,A(f,g) = {\rm index}\,T_{g(\bar{\cdot})}\oplus T_{f} = {\rm wn}(g)-{\rm wn}(f),
\]
where the last equality follows from the index theorem for Toeplitz operators.
This completes the proof of Theorem \ref{criterion}.
\end{proof}

\section{Proof of Theorem \ref{main}}
Let
\[
C(\pm\infty):=\begin{pmatrix}
a(\pm\infty)&\overline{b(\pm\infty)}\\
b(\pm\infty)&-a(\pm\infty)
\end{pmatrix},\ \ \ \ \ \tilde{C}(x):=\begin{cases}C(+\infty),\ \ \ \ \ &x\geq0,\\
C(-\infty),&x\leq-1.
\end{cases}
\]
We define a unitary self-adjoint operator $\tilde{C}$ on $\ell^2(\mathbb{Z})\oplus\ell^2(\mathbb{Z})=\ell^2(\mathbb{Z};\mathbb{C}^2)$ by
\[
(\tilde{C}\psi)(x):=\tilde{C}(x)\psi(x),\ \ \ \ \ \psi\in\ell^2(\mathbb{Z};\mathbb{C}^2),\ \ x\in\mathbb{Z}.
\]
Since $C-\tilde{C}$ is a compact operator, the pair $(\Gamma, C)$ is Fredholm if and only if $(\Gamma, \tilde{C})$ is Fredholm.
In this case, ${\rm ind}(\Gamma, C)={\rm ind}(\Gamma, \tilde{C})$ holds \cite[Remark 4.1]{Su}.
Hence, without loss of generality, we may assume that $C=\tilde{C}$.
Then, since $\{Q_{\epsilon,+}-A(F_-,F_+)\}|x\rangle=0$ for any $x\not=-1$, 
the operator $Q_{\epsilon,+}-A(F_-,F_+)$ is of finite rank.
In particular, it is compact.
Theorem \ref{main} now follows from Theorem \ref{criterion}.

\section{An alternative derivation of the Suzuki-Tanaka index formula}
We first remark that $F_{\pm}$ vanish nowhere if and only if $zF_{\pm}$ vanish nowhere.
In this case,
\[
{\rm wn}(zF_+)-{\rm wn}(zF_-) = \{{\rm wn}(F_+)+1\}-\{{\rm wn}(F_-)+1\} ={\rm wn}(F_+)-{\rm wn}(F_-)
\]
holds.
We regard $zF_{\pm}$ as polynomial functions defined on the whole $\mathbb{C}$.
If $|p|=1$, then the formula immediately follows. 
In what follows, we assume that $|p|\not=1$.

If $|a(+\infty)|=1$, then $zF_+$ vanishes nowhere on $\mathbb{T}$ and ${\rm wn}(zF_+)=1$ holds.
If $|a(+\infty)|\not=1$, then the zeros of the polynomial $zF_+$ are given by
\[
z_1^{(+)} := \frac{|q|\{1-a(+\infty)\}}{(1+p)e^{i\theta}b(+\infty)},\ \ \ \ \ 
z_2^{(+)} := -\frac{|q|\{1+a(+\infty)\}}{(1+p)e^{i\theta}b(+\infty)}.
\]
Thus $zF_+$ vanishes nowhere on $\mathbb{T}$ if and only if $|z_1^{(+)}|\not=1$ and $|z_2^{(+)}|\not=1$.
A direct computation shows that the last condition is equivalent to that $|p|\not=|a(+\infty)|$.
In this case, to determine the winding numbers of $zF_{\pm}$,
let us regard $\mathbb{T}$ as a circle oriented counterclockwise.
Then
\begin{align*}
{\rm wn}(zF_+) &= \frac{1}{2\pi i}\int_{\mathbb{T}}\frac{(zF_+)'(z)}{zF_+(z)}dz\\
&= \textrm{the number of zeros of $zF_+$ in the open unit disc},
\end{align*}
where the second equality follows from the fact that $zF_+$ are entire functions.
Since $|z_j^{(+)}|<1$ if and only if $(-1)^ja<p$ for each $j=1,2$, 
by considering the three cases $p>0$, $p=0$ and $p<0$ separately, we obtain
\[
{\rm wn}(zF_+)=\begin{cases}2,\ \ \ \ \ &|a(+\infty)|<p,\\
1,&|p|<|a(+\infty)|,\\
0,&|a(+\infty)|<-p.
\end{cases}
\]
Note that the above equality is true even for the case $|a(+\infty)|=1$.

Similarly, we can show that $zF_-$ vanishes nowhere on $\mathbb{T}$ if and only if $|p|\not=|a(-\infty)|$.
In this case, we have
\[
{\rm wn}(zF_+)=\begin{cases}2,\ \ \ \ \ &|a(-\infty)|<p,\\
1,&|p|<|a(-\infty)|,\\
0,&|a(-\infty)|<-p.
\end{cases}
\]
Therefore, by Theorem \ref{main}, we get the Suzuki-Tanaka index formula.

\section*{Acknowledgement}
I would like to thank Akito Suzuki and Yohei Tanaka for helpful discussions on split-step quantum walks and their careful reading of the first draft of this paper.
I also would like to thank Itaru Sasaki for informing me of the relation between the winding number of a holomorphic function and the number of zeros of it in the open unit disc, 
which was used in Section 4.

\end{document}